\newtheorem{theorem}{Theorem}
\newtheorem{lemma}{Lemma}
\newtheorem{proof}{Proof}
\newtheorem{corollary}{Corollary}
\begin{document}

\preprint{APS/123-QED}

\title{\textbf{Quantifying system-environment synergistic information by effective information decomposition} 
}% 

\author{Mingzhe Yang}
 % \affiliation{School of Systems Science, Beijing Normal University.}%Lines break automatically or can be forced with \\
\author{Linli Pan}%
\affiliation{%
 School of Systems Science, Beijing Normal University.
}%

\author{Jiang Zhang}
 
\affiliation{
 School of Systems Science, Beijing Normal University
}%
\affiliation{
 Swarma Research
}%

\date{\today}% It is always \today, today,
             %  but any date may be explicitly specified

\begin{abstract}
What is the most crucial characteristic of a system with life activity? Currently, many theories have attempted to explain the most essential difference between living systems and general systems, such as the self-organization theory and the free energy principle, but there is a lack of a reasonable indicator that can measure to what extent a system can be regarded as a system with life characteristics, especially the lack of attention to the dynamic characteristics of life systems. In this article, we propose a new indicator at the level of dynamic mechanisms to measure the ability of a system to flexibly respond to the environment. We proved that this indicator satisfies the axiom system of multivariate information decomposition in the partial information decomposition (PID) framework. Through further disassembly and analysis of this indicator, we found that it is determined by the degree of entanglement between system and environmental variables in the dynamics and the magnitude of noise. We conducted measurements on cellular automata (CA), random Boolean networks, and real gene regulatory networks (GRN), verified its relationship with the type of CA and the Langton parameter, and identified that the feedback loops have high abilities to flexibly respond to the environment on the GRN. We also combined machine learning technology to prove that this framework can be applied in the case of unknown dynamics. 
\end{abstract}

\keywords{synergy, flexibilty, effective information, partial information decomposition, gene regulatory networks}%Use showkeys class option if keyword
                              %display desired
\maketitle

%\tableofcontents

\section{\label{sec:intro}Introduction}
Many complex systems exhibit life-like characteristics. Flexible robots interact seamlessly with humans, and orderly online communities give rise to innovative crowdsourced products. So what is the key difference between them and ordinary systems? The self-organization theory is the first to answer this question \cite{kauffman1993origins}. People believe that a system must have the ability of self-organization to be considered a system with life-like characteristics \cite{langton1990computation, lansing2003complex}. However, life is not just an open system with self-organization \cite{friston2007free}. More importantly, they can still maintain the stability of their own structure and function when faced with a diverse and changing environment. For example, snowflake is a self-organizing system \cite{lansing2003complex}, but it is not considered to have the activity possessed by life. Once the environmental temperature rises, the snowflake melts. If this snowflake could autonomously avoid high-temperature environments and phase transitions that would damage its own structure, then it would be considered as an adaptive system with life activity \cite{friston2007free}. This kind of adaptive systems has been discovered in a large number of papers in various fields and is a major characteristic of complex systems \cite{holland1991artificial, holland1992adaptation, holland1992complex}. 

In addition to qualitative discussions, we urgently need a formal framework to quantify and identify this unique property of life. There are already metrics to measure the degree of a system's self-organization \cite{rosas2018information,gershenson2012complexity}. However, to date, few indicators exist that assist us in measuring to what extent a system can cope with environmental changes. The proposal of the information theory of individuality \cite{krakauer2020information} is to measure the individuality of living systems from the perspective of information dynamics. It describes the individual survival of a system as maximizing the transmission of its own information over time, so mutual information and conditional mutual information are used to define the individuality of an organism. 

However, in individual information theory, the calculation of the metric (mutual information) depends on the state distribution of the observed data. The state distribution we observe depends on the initial conditions of the system and the environment, as well as the duration of the dynamical process (if the system is in a non-steady state). However, the characteristic of life's flexible response to the environment is not a property that varies with time and state, but rather reflects the characteristics of the interaction mechanisms between the system and the environment, representing a dynamical property. Therefore, when discussing the features of complex systems, we should focus on the quantities defined on their causal mechanisms, rather than the states \cite{hoel2013quantifying}. These causal mechanisms should be invariant in time. This constancy is crucial for describing the properties of the system.

Tononi and Hoel et al. \cite{oizumi2014phenomenology, hoel2013quantifying} have proposed information indicators measured at the causal mechanism level, such as effective information (EI) \cite{oizumi2014phenomenology, hoel2013quantifying}. The causal mechanism typically remains invariant with respect to state and time. It is commonly assumed that dynamics are described by Markovian transition probability matrices (TPM) \cite{hoel2013quantifying, hoel2017map}, and EI is designed as a function of TPM \cite{yuan2024emergence, zhang2024dynamical}. EI is used to measure the strength of causal effects in dynamics \cite{hoel2013quantifying, zhang2024dynamical}, and the difference in EI between macro and micro dynamics is employed to quantify the degree of emergence in complex systems \cite{hoel2013quantifying,hoel2017map}. When dynamical mechanisms are unknown, machine learning techniques can identify causal mechanisms from the data \cite{yang2024finding}. Although EI itself does not account for the interactions between the system and the environment \cite{varley2022flickering}, this does not prevent us from extending EI to develop a causal metric that incorporates environmental influences. Consequently, we can leverage this kind of indicator, along with other relevant metrics \cite{pearl2018book}, to describe causal properties and craft an indicator capable of measuring the system's responsiveness to the environment at the causal mechanism level.

Moreover, through partial information decomposition(PID) \cite{williams2010nonnegative}, we see that individuality involves redundant information and synergistic information in which the system and the environment are coupled together. Individual information theory uses the PID theory to explain the physical meaning of individuality indicators, but failed to calculate information atoms that describe the system's flexible response to the environment \cite{krakauer2020information}. Numerous methods have been proposed to calculate information atoms, yet their computational outcomes variously contravene common consensus on certain properties \cite{lizier2018information}. In this paper, we introduce EI, which requires the input variables from the previous moment to be uniformly distributed \cite{oizumi2014phenomenology, hoel2013quantifying} and starting from the existing PID axiomatic system \cite{bertschinger2013shared}, we derive a computable definition of information atoms in a three-variable system. This allows us to obtain information-theoretic metrics that precisely align with the meanings of uniqueness and synergy.

In this letter, we will define an indicator at the causal mechanism level to characterize the ability of a system to maintain its own structure and function when dealing with environmental changes. Through mathematical proofs, we will show that this indicator is precisely the synergistic information of the system itself when the system and the environment are coupled together. In our numerical experiments, we demonstrate the correlation between this indicator and the edge of chaos within cellular automata. Additionally, we apply this indicator to gene regulatory networks (GRNs), uncovering the significance of feedback loops (FBLs) and the responses of the steady states of a highly synergistic system to environmental alterations. Notably, FBLs are instrumental in carrying out the biological functions of biological systems in reacting to environmental changes \cite{alon2019introduction}. Additionally, we conducted a machine learning experiment to demonstrate that causal mechanisms can be identified using machine learning when data alone is available.

\section{\label{sec:2}Formulation}
Next, we provide the formal expressions for the system and the environment. There are three variable combinations that have a causal influence on the system's state at the next moment: the system itself, the environment, and the joint variables of the system and environment. Based on this, we define three causal mechanisms: the Individual Mechanism, the External Driving Mechanism, and the Distinct Mechanism.

\subsection{Distinct Mechanism}
We define the \textit{System} \( X \) as a subset of the \textit{World} \( U \), with respective state sets \( \Omega_X \) and \( \Omega_U \). Assuming that the dynamics of \( U \) satisfy Markovianity and are discrete, they can be described by the conditional probability \( P(U^{t+1} \mid U^t) \), where \( t \) denotes the time in this stochastic process. To exclusively measure the dynamical properties, we eliminate the influence of the World data distribution by introducing the do-operator \cite{pearl2018book}, denoted as $do(U^t \sim \mathcal{U}(\Omega_{U}))$, where $\mathcal{U}$ represents the uniform distribution.

For the stochastic process at time \( t+1 \), our analysis focuses exclusively on the temporal evolution of the target system \( X^{t+1} \). This necessitates marginalizing over extraneous variables in the global system \( U \), thereby restricting attention to the marginalized conditional probability: $P(X^{t+1}\mid U^t)$. Formally, this probability measure is obtained through state space projection:  $\forall x^{t+1} \in \Omega_X, P(X^{t+1} = x^{t+1} \mid U^t) = \sum_{\substack{u^{t+1} \in \Omega_U \\ \pi_X(u^{t+1}) = x^{t+1}}} P(U^{t+1} = u^{t+1} \mid U^t )$. Here \( \pi_X: \Omega_U \to \Omega_X \) represents the canonical projection mapping that extracts the \( X \)-component from the global state.  

% For the random variable at time \( t+1 \), since we are only concerned with the changes in system \( X^{t+1} \), we can disregard the random variables outside the system, that is we focus on $P(X^{t+1}\mid U^t)$, $\forall x^{t+1}\in \Omega_X,P(X^{t+1}=x^{t+1}\mid U^t)=\sum_{x^{t+1}\subseteq u^{t+1}}P(U^{t+1}=u^{t+1}\mid U^t)$. Then, we define the \textit{Environment} \( E^t \) as the smallest subset of \( U^t/X^t \) that satisfies the condition \( P(X^{t + 1} \mid X^t, E^t) = P(X^{t + 1} \mid U^t) \). Variables outside of this set are independent of \( X^{t+1} \) and are termed \textit{background conditions}. In Figure \ref{fig:formu1}(a), \( E \) corresponds to the parent nodes of the system. 

% and satisfies 
% \begin{equation}
% \label{eq:distinct_tpm}
%     P(X^{t + 1} \mid X^t, E^t)=P(X^{t + 1} \mid U^t).
% \end{equation}
%For \( P(X^{t+1} \mid U^t) \), in \( U^t \), aside from the system's own variables \( X^t \), there are variables that causally affect \( X^{t+1} \). The smallest set of variables satisfying this condition is referred to as the \textit{Environment} \( E \). 
% We also assume that individual random variables $U_i \in U$ are conditionally independent of each other given the state at the previous moment \cite{albantakis2023integrated}:

% \begin{gather}
% \label{eq:condi_u}
%     p(u^{t+1} \mid u^t)=\prod_{i=1}^np(u_i^{t+1}\mid u^t)
% \end{gather}
\begin{figure}
    \centering
    \includegraphics[width=1.0\linewidth]{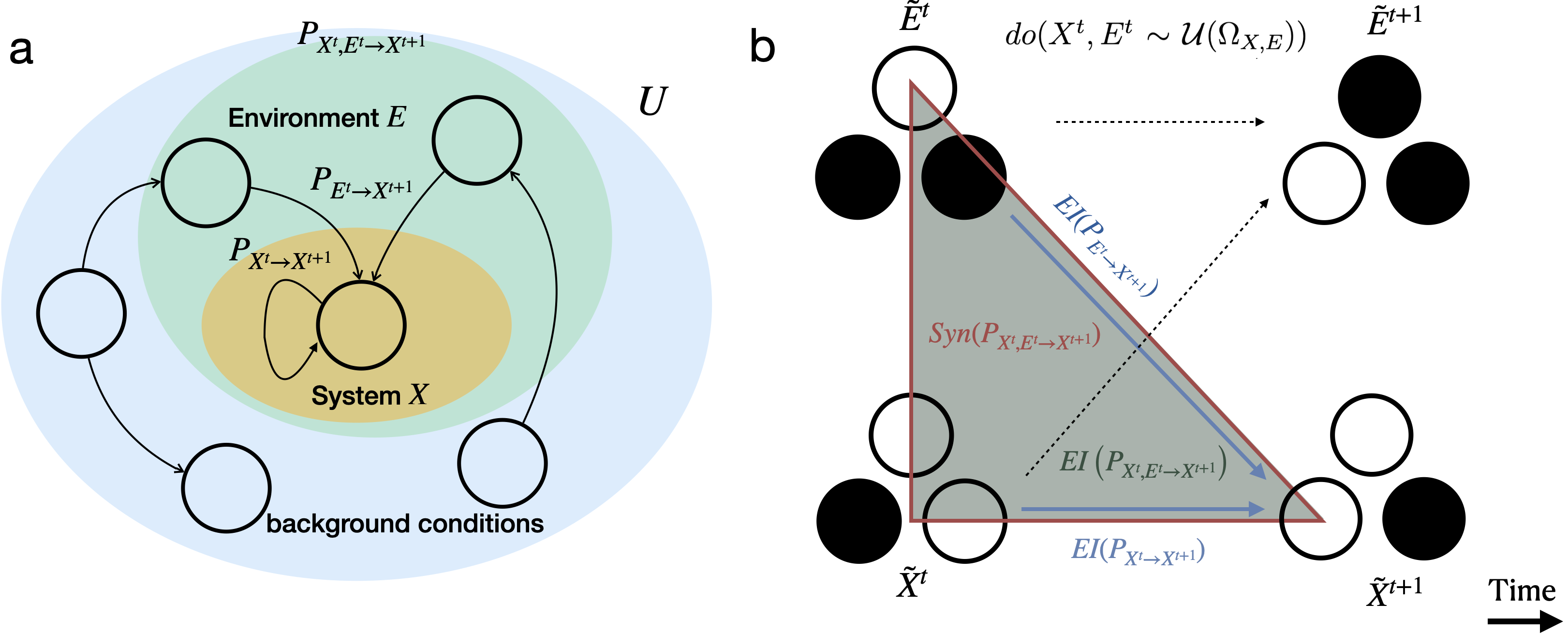}
    \caption{The causal diagram of the system's interaction with the environment. (a) When a set of variables is designated as the system, the spatial distinction between the system, environment, and background conditions is made, with arrows representing the causal relationships between variables. (b) The causal diagram of the interaction between $X$ and $E$ over time. Changes in the filling of circles in the diagram represent state transitions. In the diagram, the green triangle and the formula represent the EI from both the system and the environment to the system itself, the blue lines and the formulas represent the two types of unique information, and the red lines and the formula represent the effective synergistic information between the system and its environment, also known as \textit{flexibility}. This diagram is consistent with the causal diagram described in \cite{krakauer2020information}.}
    \label{fig:formu1}
\end{figure}

Referring to \cite{albantakis2023integrated}, we collectively term the system and environment as a \textit{Distinction}. We are solely concerned with the impact of the system and environment on the system's state at the next moment, thus we define the Distinct TPM as 
\begin{equation}
\label{eq:distinct_tpm}
    P_{X^t,E^t \to X^{t+1}}: P(X^{t+1}|X^t,E^t). 
\end{equation}
The state spaces of the joint variable \( X, E \) and of \( X \) are denoted as \( \Omega_{(X,E)} \) and \( \Omega_{X} \), respectively. Consequently, the shape of this TPM is \( |\Omega_{(X,E)}| \times |\Omega_{X}| \). In Figure \ref{fig:formu1}(b), we focus on the causal arrows from \(X^t\) and \(E^t\) pointing to \(X^{t+1}\), which corresponds to the dynamics described by the system's distinct TPM.

\subsection{Individual and External Driving Mechanisms}
Starting with the distinct TPM \( P_{X^t,E^t \to X^{t+1}} \), we can derive the TPM \( P_{X^t \to X^{t+1}} \), which considers only the system's internal dynamics and excludes environmental information, termed the \textit{Individual Mechanism}. Meanwhile, we extract the TPM \( P_{E^t \to X^{t+1}} \), which focuses exclusively on the environment's impact on the system's information, termed the \textit{External Driving Mechanism}. The definitions of these mechanisms are as follows, and their corresponding relationships with the causal diagrams are also marked in Figure \ref{fig:formu1}(a).
\begin{gather}
\label{eq:im}
     P_{X^t \to X^{t+1}}=|\Omega_E|^{-1}\sum_{E^t=e^t} p(X^{t+1}|X^t,E^t).\\
\label{eq:edm}
     P_{E^t \to X^{t+1}}=|\Omega_X|^{-1}\sum_{X^t=x^t} p(X^{t+1}|X^t,E^t).
\end{gather}
When we observe the probability distribution of environmental states, the conditional probability of the system can be obtained through $P(X^{t+1}\mid X^t)=\sum_{E^t=e^t} P(E^t)P(X^{t+1}\mid X^t,E^t)$. Because we introduce the do-operator \cite{pearl2018book}, which intervenes in both the environment and the system to achieve a uniform distribution, denoted as $do(X^t, E^t \sim \mathcal{U}(\Omega_{X,E}))$ in Figure \ref{fig:formu1}(b), we have $P(E^t)=|\Omega_E|^{-1}$. Consequently, we derive the expressions for $P_{X^t \to X^{t+1}}$ and, similarly, for $P_{E^t \to X^{t+1}}$ as Eqs. (\ref{eq:im}) and (\ref{eq:edm}) describe.

\subsection{Definition of Flexibility}
After obtaining those TPMs, we introduce the \(EI\) function for an arbitrary causal mechanism (TPM). \(EI\) represents the effective information, which quantifies the strength of causal effects for a TPM \cite{hoel2013quantifying,yuan2024emergence}. EI is defined as a mutual information for an intervened uniformly distributed input variable \( X \) and its corresponding output variable \( Y \) as shown in the following formula.

\begin{equation}
    \begin{aligned}
        EI(P_{X\to Y}) &\equiv  I(X,Y\mid do(X \sim \mathcal{U}(\Omega_{X})))\\
         &=\frac{1}{N}\sum^N_{i=1}\sum^N_{j=1}p_{ij}\log\frac{N\cdot p_{ij}}{\sum_{k=1}^N  p_{kj}},
    \end{aligned}
\end{equation}
where $p_{ij}$ is an element of the TPM $P_{X\to Y}$, and $N$ is the number of states of the input variable. For further details, see Appendix \ref{sec:ei}. We can define EIs for the mechanisms $P_{X^t\rightarrow X^{t+1}}$ and $P_{E^t\rightarrow X^{t+1}}$ which are also coined as Individual Driving Information and External Driving Information:
\begin{gather}
\label{eq:un1}
    EI(P_{X^t \to X^{t+1}})=I(X^t,X^{t+1}\mid do(X^t, E^t \sim \mathcal{U}(\Omega_{X,E}))),\\
\label{eq:un2}
    EI(P_{E^t \to X^{t+1}})=I(E^t,X^{t+1}\mid do(X^t, E^t \sim \mathcal{U}(\Omega_{X,E}))).
\end{gather}
These equations represent, respectively, the portion of information for the next moment of the system that is provided solely by the system itself and the portion provided solely by the environment, respectively. Naturally, due to the properties of mutual information, both types of effective information are non-negative. Within the effective joint mutual information, after the subtraction of these two components, what remains is the information that is exclusively provided by the system and the environment in union, termed as \textit{flexibility}, or the effective synergy between system and environment, denoted by \( Syn(P_{X^t,E^t\to X^{t+1}}) \). 
\begin{gather}
\label{eq:syn}
    Syn(P_{X^t,E^t\to X^{t+1}})=EI(P_{X^t,E^t\to X^{t+1}})-EI(P_{X^t \to X^{t+1}})-EI(P_{E^t \to X^{t+1}}).
\end{gather}
The correspondence between these indicators and the causal diagram is presented in Figure \ref{fig:formu1}(b). 

\subsection{Properties of Flexibility}
Actually, \( Syn(P_{X^t,E^t\to X^{t+1}}) \) aligns with the definitions and axiomatic system of PID theory \cite{williams2011information} regarding the requirements for synergistic information in trivariable systems. 

In the following paragraphs,  we denote \( \tilde{Z} \) as the intervened version of any random variable $Z$ after the intervention \( do(X^t, E^t \sim \mathcal{U}(\Omega_{X,E})) \). Consequently, we have the following theorem:
\begin{theorem}
\label{th:syn}
     In a trivariable system, the flexibility defined in Eq.(\ref{eq:syn}) is the synergistic information of $\tilde{X}^t, \tilde{E}^t$ with respect to $\tilde{X}^{t+1}$.
\end{theorem}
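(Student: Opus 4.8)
The plan is to verify that the quantity $Syn(P_{X^t,E^t\to X^{t+1}})$ satisfies the defining consistency equations of a bivariate partial information decomposition (PID) of the target $\tilde X^{t+1}$ with respect to the sources $\tilde X^t$ and $\tilde E^t$. Recall that a PID assigns four nonnegative atoms --- the redundancy $R$, the two unique informations $U_X$ and $U_E$, and the synergy $S$ --- subject to the three identities $I(\tilde X^t,\tilde E^t;\tilde X^{t+1}) = R + U_X + U_E + S$, $I(\tilde X^t;\tilde X^{t+1}) = R + U_X$, and $I(\tilde E^t;\tilde X^{t+1}) = R + U_E$. The key observation is that, under the intervention $do(X^t,E^t\sim\mathcal U(\Omega_{X,E}))$, the marginalized TPMs in Eqs.~(\ref{eq:im}) and (\ref{eq:edm}) are exactly the conditional distributions $P(\tilde X^{t+1}\mid \tilde X^t)$ and $P(\tilde X^{t+1}\mid \tilde E^t)$, because the uniform product intervention makes $\tilde X^t$ and $\tilde E^t$ independent and uniform, so averaging the distinct TPM over the complementary variable coincides with genuine marginalization. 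Hence $EI(P_{X^t\to X^{t+1}}) = I(\tilde X^t;\tilde X^{t+1})$, $EI(P_{E^t\to X^{t+1}}) = I(\tilde E^t;\tilde X^{t+1})$, and $EI(P_{X^t,E^t\to X^{t+1}}) = I(\tilde X^t,\tilde E^t;\tilde X^{t+1})$, which is precisely the content of Eqs.~(\ref{eq:un1})--(\ref{eq:un2}).

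First I would establish this identification of the three EI terms with the three mutual-information terms rigorously, checking the independence and uniformity of $\tilde X^t,\tilde E^t$ and that marginalization commutes with the intervention. Then I would subtract the three PID identities appropriately: adding the last two and subtracting the first gives $I(\tilde X^t;\tilde X^{t+1}) + I(\tilde E^t;\tilde X^{t+1}) - I(\tilde X^t,\tilde E^t;\tilde X^{t+1}) = R - S$, i.e. the (negative) interaction information equals $R-S$. Substituting the EI identifications, this shows $Syn(P_{X^t,E^t\to X^{t+1}}) = S - R + \big(\text{something}\big)$; more directly, from the four identities one solves $S = I(\tilde X^t,\tilde E^t;\tilde X^{t+1}) - I(\tilde X^t;\tilde X^{t+1}) - I(\tilde E^t;\tilde X^{t+1}) + R$. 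The point I must make is that within the adopted axiomatic framework (citing \cite{bertschinger2013shared, williams2011information}), once the redundancy $R$ is pinned down --- and for the uniform-input / EI setting the natural redundancy is the co-information-style overlap that the authors single out --- the synergy atom $S$ reduces exactly to the alternating sum in Eq.~(\ref{eq:syn}). Concretely, I would argue that the chosen redundancy measure vanishes or is absorbed in the way the EI-based decomposition is set up, so that $S = EI(\text{joint}) - EI(X^t) - EI(E^t)$ term for term.

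The cleaner route, which I would actually adopt, is to show the quantity in Eq.~(\ref{eq:syn}) coincides with the PID synergy by checking it is consistent with \emph{whatever} valid redundancy function the framework uses together with the two uniqueness terms defined via $EI(P_{X^t\to X^{t+1}})$ and $EI(P_{E^t\to X^{t+1}})$: define $U_X := EI(P_{X^t\to X^{t+1}}) - R$ and $U_E := EI(P_{E^t\to X^{t+1}}) - R$ for the admissible $R$, verify $U_X,U_E\ge 0$ and $R\ge 0$ (this is where the uniform-intervention structure of EI is needed, and it is the one nontrivial inequality), and then read off from the total-information identity that the residual synergy atom is forced to be $S = EI(P_{X^t,E^t\to X^{t+1}}) - EI(P_{X^t\to X^{t+1}}) - EI(P_{E^t\to X^{t+1}}) + R - R = Syn(P_{X^t,E^t\to X^{t+1}})$, independent of $R$. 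I expect the main obstacle to be precisely the nonnegativity/consistency check: showing that the EI-based uniqueness terms and the retained redundancy satisfy the PID axioms simultaneously (in particular that $EI(P_{X^t\to X^{t+1}})$ and $EI(P_{E^t\to X^{t+1}})$ are each at least the redundancy), since this is exactly the point where most PID candidate measures fail some desideratum; I would handle it by invoking the specific axiom set of \cite{bertschinger2013shared} and exploiting that the do-intervention renders the two sources independent, which collapses several of the pathological cases.
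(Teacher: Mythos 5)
Your identification of the three EI terms with the three mutual informations under the intervention $do(X^t,E^t\sim\mathcal U(\Omega_{X,E}))$ is correct and matches the paper. The gap is in the step where you claim the synergy atom comes out ``independent of $R$.'' The algebra does not support this: with $U_X = I(\tilde X^t;\tilde X^{t+1})-R$ and $U_E = I(\tilde E^t;\tilde X^{t+1})-R$, the total-information identity gives
\begin{equation}
S \;=\; I(\tilde X^t,\tilde E^t;\tilde X^{t+1}) - I(\tilde X^t;\tilde X^{t+1}) - I(\tilde E^t;\tilde X^{t+1}) + R,
\end{equation}
so the two $R$'s do not cancel, and the quantity in Eq.~(\ref{eq:syn}) equals the PID synergy if and only if $R=Red(\tilde X^t,\tilde E^t;\tilde X^{t+1})=0$. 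You gesture at this (``the chosen redundancy measure vanishes or is absorbed,'' ``independence collapses the pathological cases'') but never prove it, and it is not automatic: source independence does \emph{not} force zero redundancy for arbitrary redundancy measures (the Williams--Beer $I_{\min}$ can assign strictly positive redundancy to independent sources), so this is exactly the point that cannot be waved through by citing the framework.

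The paper closes precisely this hole with its Lemma~\ref{lem:red0}: applying the left chain-rule axiom $(\mathbf{LC})$ to the composite target $(X_1,X_2,Y)$ in two orders, using the identity axiom $(\mathbf{Id})$ to get $Red(X_1,X_2;(X_1,X_2))=I(X_1;X_2)=0$ when $X_1\perp X_2$, bounding the conditional redundancy by the vanishing conditional mutual information $I(X_1,X_2;Y\mid(X_1,X_2))=0$, and then invoking non-negativity to conclude $Red(X_1,X_2;Y)=0$. With that lemma, the intervened sources $\tilde X^t\perp\tilde E^t$ have zero redundancy, the unique informations reduce to the single-source mutual informations (your EI identifications), and Eq.~(\ref{eq:i_syn}) yields the flexibility formula. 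To repair your argument you would need to supply this lemma (or an equivalent derivation of $R=0$ from the axioms $\mathbf{S},\mathbf{I},\mathbf{M},\mathbf{LC},\mathbf{Id}$); the non-negativity checks on $U_X$, $U_E$ that you flag as the main obstacle then come for free, since they follow from $R=0$ and monotonicity rather than being the crux.
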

This theorem pertains to a specific PID axiomatic system(please refer to Appendix \ref{sec:pid}).
For the proof of this theorem, please refer to Appendix \ref{sec:proof1}. The upper bound of synergy is \(\min\{I(\tilde{X}^t; \tilde{X}^{t+1} | \tilde{E}^t), I(\tilde{E}^t; \tilde{X}^{t+1} | \tilde{X}^t)\}\). The proof of this property is provided in Appendix \ref{sec:proof2}. We can further decompose the synergy term, i.e., Equation \ref{eq:syn}. 
\begin{corollary}
\label{cor:decomp}
    The flexibility defined in Eq.(\ref{eq:syn}) can be decomposed into two components: Expansiveness and Introversion.
\end{corollary}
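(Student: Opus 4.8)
The plan is to start from Eq.~(\ref{eq:syn}) and rewrite each of the three effective informations, all evaluated under the common intervention $do(X^t,E^t\sim\mathcal U(\Omega_{X,E}))$, in the ``determinism minus degeneracy'' form of $EI$ recalled in Appendix~\ref{sec:ei}: for a mechanism with $N$ input states, $EI=\big(\log N-H(\tilde Y\mid\tilde X)\big)-\big(\log N-H(\tilde Y)\big)$, the first bracket being the determinism and the second the degeneracy. Two structural facts make the bookkeeping collapse. First, because the Individual and External Driving TPMs, Eqs.~(\ref{eq:im})--(\ref{eq:edm}), are obtained by marginalizing the Distinct TPM, the output variable $\tilde X^{t+1}$ induced by the joint, the individual and the external mechanisms all carry the \emph{same} marginal law, so all three degeneracy terms share the entropy $H(\tilde X^{t+1})$. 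Second, since $\Omega_{(X,E)}=\Omega_X\times\Omega_E$, the normalizers obey $\log|\Omega_{(X,E)}|=\log|\Omega_X|+\log|\Omega_E|$, so every $\log$-constant cancels in Eq.~(\ref{eq:syn}).

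Substituting $EI=\mathrm{Det}-\mathrm{Deg}$ into Eq.~(\ref{eq:syn}) splits the flexibility as $Syn(P_{X^t,E^t\to X^{t+1}})=(\mathrm{Det}_{\mathrm{joint}}-\mathrm{Det}_X-\mathrm{Det}_E)-(\mathrm{Deg}_{\mathrm{joint}}-\mathrm{Deg}_X-\mathrm{Deg}_E)$. Grouping the determinism parts, the normalizers cancel and one is left with $H(\tilde X^{t+1}\mid\tilde X^t)+H(\tilde X^{t+1}\mid\tilde E^t)-H(\tilde X^{t+1}\mid\tilde X^t,\tilde E^t)$ --- equivalently, by the chain rule, $I(\tilde E^t;\tilde X^{t+1}\mid\tilde X^t)$ plus the residual $H(\tilde X^{t+1}\mid\tilde E^t)$, or $I(\tilde X^t;\tilde X^{t+1}\mid\tilde E^t)$ plus the residual $H(\tilde X^{t+1}\mid\tilde X^t)$ --- which is the component to be named \emph{Expansiveness}: the part of the information $X^{t+1}$ receives that only the entangled joint present $(\tilde X^t,\tilde E^t)$ supplies, i.e.\ the ``entanglement'' contribution announced in the abstract. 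Grouping the degeneracy parts, the normalizers again cancel and one is left with the single term $H(\tilde X^{t+1})$, depending only on the marginal law of the system's next state, which is the ``noise/convergence'' contribution to be named \emph{Introversion}. Re-assembling with the sign dictated by Eq.~(\ref{eq:syn}) gives $Syn=\mathrm{Expansiveness}-\mathrm{Introversion}$ exactly; the overall non-negativity and the relative sign of the two pieces then follow from the same conditioning inequality used in Theorem~\ref{th:syn}, because the do-operator renders $\tilde X^t$ and $\tilde E^t$ independent, so there is no redundant atom and conditioning can only raise the mutual information in question.

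The information-theoretic algebra itself is routine; the real work is to match this regrouping to the precise definitions of Expansiveness and Introversion fixed in the appendix, including the sign/normalization conventions. In particular one must check that the normalizers genuinely telescope --- this uses the product structure of $\Omega_{(X,E)}$ and the marginalization identities Eqs.~(\ref{eq:im})--(\ref{eq:edm}), not merely Markovianity --- and that each residual entropy term is attached to the half of the decomposition for which it carries the intended interpretation. A secondary consistency check is that the decomposition respects the bound $\min\{I(\tilde X^t;\tilde X^{t+1}\mid\tilde E^t),\,I(\tilde E^t;\tilde X^{t+1}\mid\tilde X^t)\}$ stated after Theorem~\ref{th:syn}: both bounding quantities reappear naturally when Expansiveness is written via the chain rule in its two equivalent forms, and confirming that those two forms coincide is the one place where a short explicit computation is unavoidable.
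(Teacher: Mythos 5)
Your overall strategy is the same as the paper's (substitute the determinism/non-degeneracy decomposition of $EI$, Eq.~(\ref{eq:ei_factor}), into Eq.~(\ref{eq:syn}) and regroup, using the fact that all three mechanisms induce the same output marginal so the non-degeneracy terms partially cancel), and the identity you arrive at, $Syn=\bigl[H(\tilde X^{t+1}\mid\tilde X^t)+H(\tilde X^{t+1}\mid\tilde E^t)-H(\tilde X^{t+1}\mid\tilde X^t,\tilde E^t)\bigr]-H(\tilde X^{t+1})$, is algebraically correct. But it is not the decomposition claimed in the corollary, because Expansiveness and Introversion are not free names to be attached to whatever grouping falls out: they are fixed by Eqs.~(\ref{eq:exp}) and (\ref{eq:int}). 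In those definitions $Exp=H(\tilde X^{t+1}\mid\tilde E^t)+H(\tilde X^{t+1}\mid\tilde X^t)$ only (the averaged row entropies of the two marginal mechanisms), while the joint-mechanism determinism term $-H(\tilde X^{t+1}\mid\tilde X^t,\tilde E^t)$ and the surviving non-degeneracy term $-H(\tilde X^{t+1})$ both belong to $Int$, together with the constant $2\log_2|\Omega_X|$ that keeps both components in $[0,2\log_2|\Omega_X|]$; the resulting relation is $Syn=Exp+Int-2\log_2|\Omega_X|$, not your ``$Syn=\mathrm{Expansiveness}-\mathrm{Introversion}$''. You explicitly defer ``matching the regrouping to the precise definitions'' as the real work — that deferred step is exactly where your grouping fails: your ``Expansiveness'' absorbs the noise term $H(\tilde X^{t+1}\mid\tilde X^t,\tilde E^t)$ that the paper deliberately places in Introversion (so that low dynamical noise shows up as high introversion, which is how the CA and RBN results are interpreted), and your ``Introversion'' is only $H(\tilde X^{t+1})$ with the opposite sign convention.

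Two smaller points. First, the product structure $\log|\Omega_{(X,E)}|=\log|\Omega_X|+\log|\Omega_E|$ is not actually needed: with the paper's form of Eq.~(\ref{eq:ei_factor}) there are no $\log N$ offsets at all, and in your convention the $\log N$ cancels within each $EI$ separately, so nothing hinges on that telescoping. Second, the closing remarks about non-negativity of $Syn$ and consistency with the bound $\min\{I(\tilde X^t;\tilde X^{t+1}\mid\tilde E^t),I(\tilde E^t;\tilde X^{t+1}\mid\tilde X^t)\}$ are true but irrelevant to the corollary, which is a purely algebraic rearrangement; to repair the proof you only need to write out the three $EI$'s via Eq.~(\ref{eq:ei_factor}) with rows $P_{x,e}$, $\frac{1}{|\Omega_E|}\sum_e P_{x,e}$, $\frac{1}{|\Omega_X|}\sum_x P_{x,e}$, cancel one copy of $H\bigl(\frac{1}{|\Omega_{X,E}|}\sum_{x,e}P_{x,e}\bigr)$, and then add and subtract $2\log_2|\Omega_X|$ so that the terms regroup exactly into Eqs.~(\ref{eq:exp}) and (\ref{eq:int}).
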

We define \textit{Expansiveness}, abbreviated as \textit{Exp}, and \textit{Introversion}, abbreviated as \textit{Int}, as follows:
\begin{gather}
\label{eq:exp}
    Exp(P_{X^t,E^t\to X^{t+1}})=\frac{1}{|\Omega_E|} \sum_{e \in \Omega_E} H\left( \frac{1}{|\Omega_X|} \sum_{x \in \Omega_X} P_{x,e} \right) +  \frac{1}{|\Omega_X|} \sum_{x \in \Omega_X} H\left( \frac{1}{|\Omega_E|} \sum_{e \in \Omega_E} P_{x,e} \right),\\
\label{eq:int}
    Int(P_{X^t,E^t\to X^{t+1}})=2\log_2{|\Omega_X|}- \frac{1}{|\Omega_{X,E}|} \sum_{x \in \Omega_X} \sum_{e \in \Omega_E} H(P_{x,e}) - H\left(\frac{1}{|\Omega_{X,E}|} \sum_{x \in \Omega_X} \sum_{e \in \Omega_E} P_{x,e} \right).
\end{gather}
The proof of Corollary \ref{cor:decomp} can be found in Appendix \ref{sec:proof3}.

To elucidate the meanings of expansiveness and introversion, we first introduce the EI function and its decomposition \cite{hoel2017map}.
\begin{equation}
\label{eq:ei_factor}
    EI(P_{X\to Y}) = \underbrace{ - \left( \frac{1}{N} \sum_{i=1}^N H(P_i) \right) }_{\text{determinism}} + \underbrace{ H\left( \frac{1}{N} \sum_{i=1}^N P_i \right) }_{\text{non-degeneracy}}
\end{equation}
Here, \( X \) denotes an arbitrary input variable with a state space of size \( N \), \( Y \) represents the corresponding output variable, \( P_i \) corresponds to the \( i \)-th row of the TPM \( P_{X \to Y} \), and \( H(P) \) indicates the Shannon entropy of probability distribution \( P \). The term $- \left( \frac{1}{N} \sum_{i=1}^N H(P_i) \right)$ represents determinism; when it is high, it indicates that the system has low noise. The term $H\left( \frac{1}{N} \sum_{i=1}^N P_i \right)$ represents non-degeneracy; when it is low(degeneracy is high), it implies that the system will deterministically converge to certain states, indicative of attractor dynamics.
\begin{figure}
    \centering
    \includegraphics[width=0.8\linewidth]{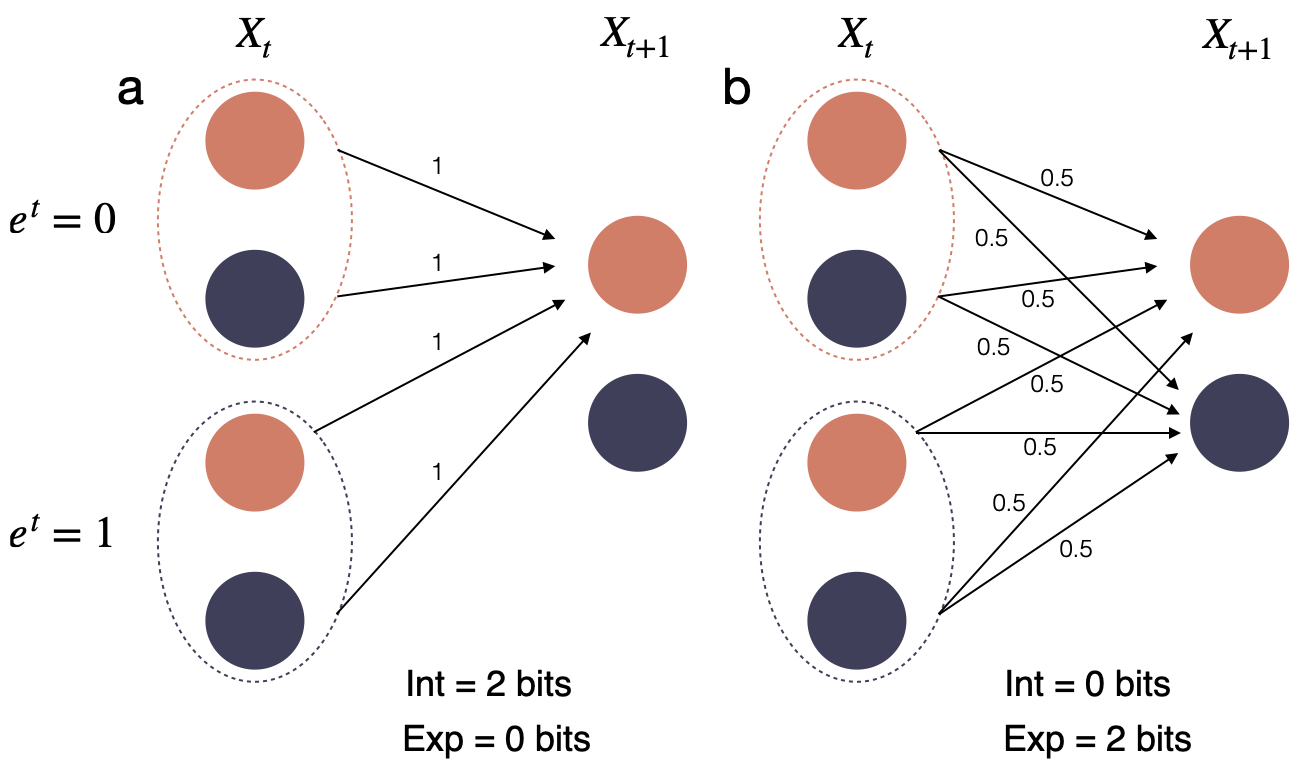}
    \caption{Schematic diagrams of expansiveness and introversion. Different colors of the circles represent different system states, while different colors of the dashed lines indicate different environmental states at that time. Arrows with associated numerical values denote transition probabilities from system-environment states at time \( t \) to system states at time \( t+1 \). (a) shows the case of low expansiveness and high introversion, while (b) shows the case of high expansiveness and low introversion.} 
    \label{fig:exp_int}
\end{figure}
When the system employs different TPMs corresponding to individual environmental states, the environmental context becomes explicitly incorporated. In contrast, when environmental states remain unspecified, the system adopts an environment-averaged TPM. Through the Effective Information (EI) decomposition framework, Equation \ref{eq:exp} measures the system's state-specific differentiation under well-defined environments versus its stochastic variability under environmental uncertainty. These dual aspects together constitute Expansiveness as the system's outward adaptive orientation. Correspondingly, Equation \ref{eq:int} captures the system's structured coordination in explicit environments versus its reduced differentiation in ambiguous contexts, jointly characterizing Introversion as the internal consolidation tendency. This demonstrates two distinct pathways through which the system enhances flexibility: expansiveness and introversion. The relationship between the magnitudes of Exp and Int under different conditions can be referred to in Figure \ref{fig:exp_int}.
% When the system maintains different TPMs for each environmental state, the environment is explicitly considered. Conversely, when the system operates under environmental ambiguity, it utilizes an averaged TPM across environmental states. Through the effective information (EI) decomposition framework, Equation \ref{eq:exp} quantifies the system's non-degeneracy under explicit environments versus its uncertainty under environmental ambiguity, collectively defining Expansiveness as the outward-oriented tendency. Similarly, Equation \ref{eq:int} characterizes deterministic dynamics in explicit environments versus degeneracy in ambiguous conditions, jointly representing Introversion as the inward-consolidation propensity.

Given that the Shannon entropy of the system's probability distribution ranges from 0 to \(\log_2{|\Omega_X|}\), we can determine the numerical ranges for Exp and Int: \(0 \leq Exp(P_{X^t,E^t\to X^{t+1}}), \allowbreak Int(P_{X^t,E^t\to X^{t+1}}) \leq 2\log_2{|\Omega_X|}\).

%In addition to the mathematical proof, we elucidate how the metric of flexibility demonstrates the system's ability to respond flexibly to the environment and maintain its own persistence in some toy examples. 
% These details are presented in Appendix \ref{sec:add_exp}.

\section{\label{sec:3}Results}
In the subsequent experiments, we will validate the meaning and functionality of these metrics.
\subsection{Cellular Automaton}
Cellular automata (CA), particularly one-dimensional elementary CA, are used to simulate artificial life, with Wolfram identifying 256 possible rule sets \cite{wolfram1983statistical}. Wolfram classified CA into four behavior types: stable, periodic, chaotic, and complex, with Class IV potentially being computationally universal. However, this classification is subjective, leading Langton to introduce the parameter $\lambda = 1 - \frac{n}{8}$ \cite{langton1990computation} to quantify CA behaviors, where $n$ is the number of outputs being 1 in the CA rule table. He demonstrated that CA behaviors can continuously transition from Class I to Class III as $\lambda$ varies from $0$ to $1$, with values between 0.3 and 0.6 corresponding to complex Class IV dynamics. 
\begin{figure}
    \centering
    \includegraphics[width=1\linewidth]{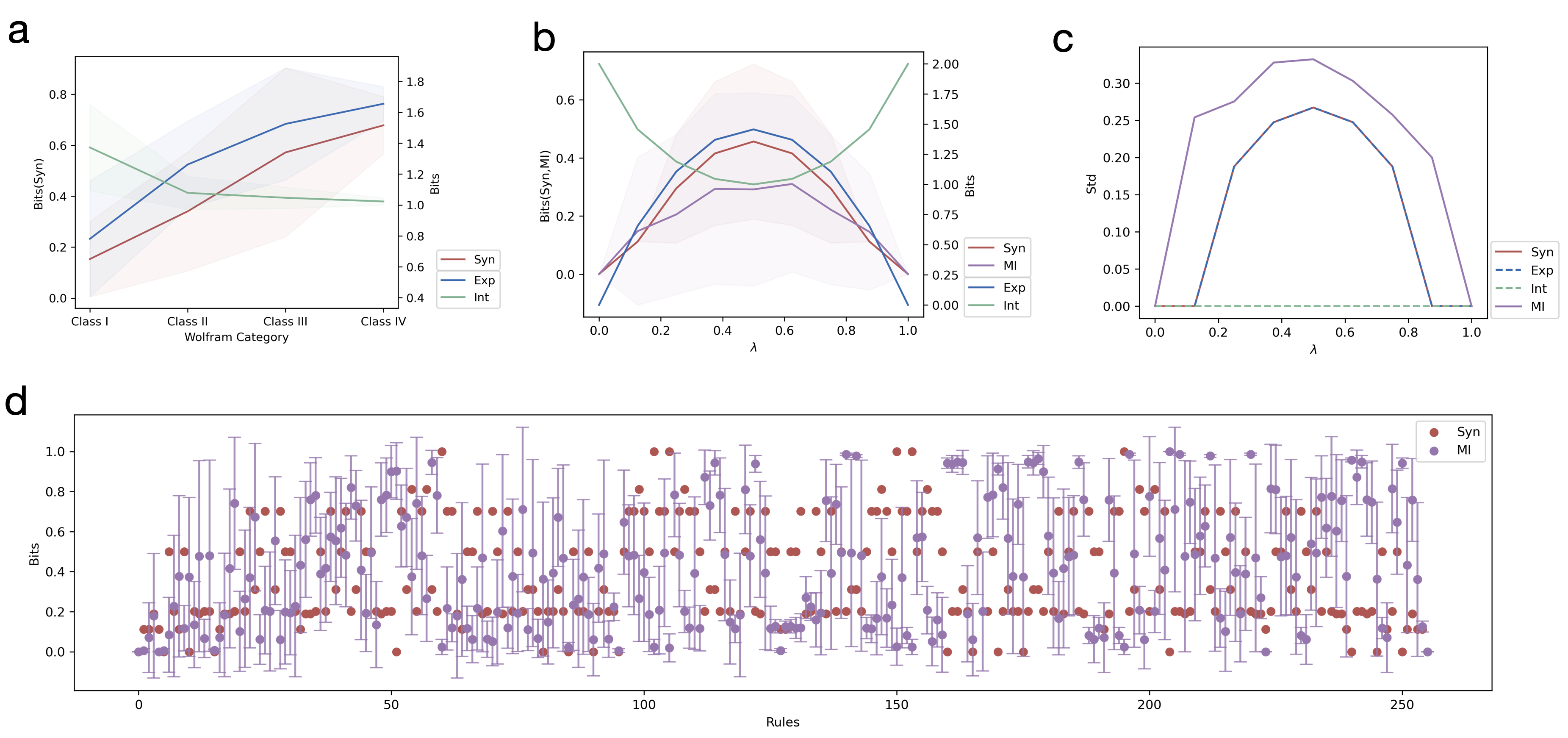}
    \caption{(a) The trend of expansiveness, introversion and flexibility between the system and the environment as the type of CA changes from Class I to Class IV. The line represents the mean, and the band represents the standard deviation. (b) A comparative trend chart of them, along with mutual information proposed by \cite{langton1990computation}, as $\lambda$ varies. (c) A comparative chart of the standard deviation for the four indicators calculated. (d) A scatter plot of flexibility and mutual information for all 256 CA. The flexibility of a specific rule-based cellular automaton is calculated without a standard deviation, whereas mutual information includes a standard deviation, depicted by error bars.
}
    \label{fig:ca}
\end{figure}
As depicted in Figure \ref{fig:ca}, we have validated the relationship between flexibility and its decomposition with the behavior types of CA. In Figure \ref{fig:ca}(a), as CA become increasingly complex, expansiveness rises while introversion declines, overall reflecting an increase in flexibility. A similar phenomenon is observed in Figure \ref{fig:ca}(b). When $\lambda$ is between 0.3 and 0.6, the increase in expansiveness exceeds the decrease in introversion, leading to an increase in flexibility. The following mathematical relationship indicates that, in noise-free CA, introversion is indeed a definite function of the $\lambda$.
\begin{equation}
\begin{aligned}
    Int(P_{X^t,E^t\to X^{t+1}})&=2\log_2{|\Omega_X|}-  \frac{1}{|\Omega_{X,E}|} \sum_{x \in \Omega_X} \sum_{e \in \Omega_E} H(P_{x,e}) - H\left(\frac{1}{|\Omega_{X,E}|} \sum_{x \in \Omega_X} \sum_{e \in \Omega_E} P_{x,e} \right)\\
    &=2\log_2{|\Omega_X|}- H\left(\frac{1}{|\Omega_{X,E}|} \sum_{x \in \Omega_X} \sum_{e \in \Omega_E} P_{x,e} \right)\\
    &=2\log_2{|\Omega_X|}-H(\lambda,1-\lambda)
\end{aligned}
\end{equation}

This explains why, in Figure \ref{fig:ca}(c), the standard deviation of introversion is zero, while the fluctuations in effective synergy are attributed to expansiveness. Concurrently, it is evident that mutual information, as a measure based on observational data, exhibits a larger standard deviation, even though its trend aligns with that of effective synergy. In Figure \ref{fig:ca}(d), we run iterations for 200 steps under each initial condition in a space of 10-cell automata. Based on these observational data, we calculate the mutual information from time t to t+1 for a single cell and compute the expectation and standard deviation over all possible initial conditions. The varying standard deviations across different rules of CA indicate that the selection of initial conditions is crucial for the computation of mutual information for some rules. In contrast, the calculation of effective synergy is independent of initial conditions.

\subsection{Identify Flexible Motifs in Gene Regulatory Networks}
We will measure the flexibility of gene regulatory networks (GRNs) using real data.

GRNs describe how a collection of genes governs key processes within a cell, which are often modeled as Boolean networks. Kadelka et al. \cite{kadelka2024meta} established the most comprehensive repository of expert-curated Boolean GRN models to date, encompassing both structural configurations and Boolean functions. These models describe the regulatory logic underlying a variety of processes in numerous species across multiple kingdoms of life. Due to computational constraints, our analysis was limited to a subset of these networks. We select 63 models, with node counts ranging from 5 to 67, encompassing animal, plant, fungal, and bacterial domains. 

To investigate which GRN structures exhibit enhanced environmental responsiveness in real-world settings, we assessed the flexibility of various three-node subgraph configurations in Figure \ref{fig:grn}(a). The analysis revealed that feedback loops (FBLs) demonstrated the highest flexibility values. In fact, FBLs in GRNs carry important biological functions. For example, FBL structures often exhibit dynamical compensation(DC), which is the ability of a model to compensate for variation in a parameter \cite{alon2019introduction}. Additionally, many oscillators in biological systems originate from negative FBLs, known as \textit{repressilators} \cite{alon2019introduction}. They play a crucial role in adapting to environmental changes and regulating their own cycles. In Figure \ref{fig:grn}(b), the mean flexibility of FBLs is also high, confirming the above conclusions. Moreover, the structure with the highest value among four-node structures is not a simple FBL but a more connected structure that includes FBLs (see the highlighted part in the figure). This structure has not yet been fully studied and named by biologists. Perhaps it carries some interesting functions related to biological adaptation to the environment that have yet to be discovered.

To verify that systems with higher flexibility have a stronger ability to respond to environmental changes, we compared the evolution of gene activation states under environmental shocks in Figure \ref{fig:grn}(c-d). The genes in (c) are from the GRN of macrophage activation, with a flexibility of 0.566, while those in (d) are from the GRN of tumour cell invasion and migration, with a flexibility of 0. Comparing the time series curves, the former exhibits a greater diversity of steady states under different environmental conditions. We measured more precisely the mean mutual information between consecutive moments when environmental state changes lead to shifts in system steady states, finding a positive correlation coefficient of 0.487 with flexibility. For further experimental details, see Appendix \ref{sec:grn}.
\begin{figure}
    \centering
    \includegraphics[width=1\linewidth]{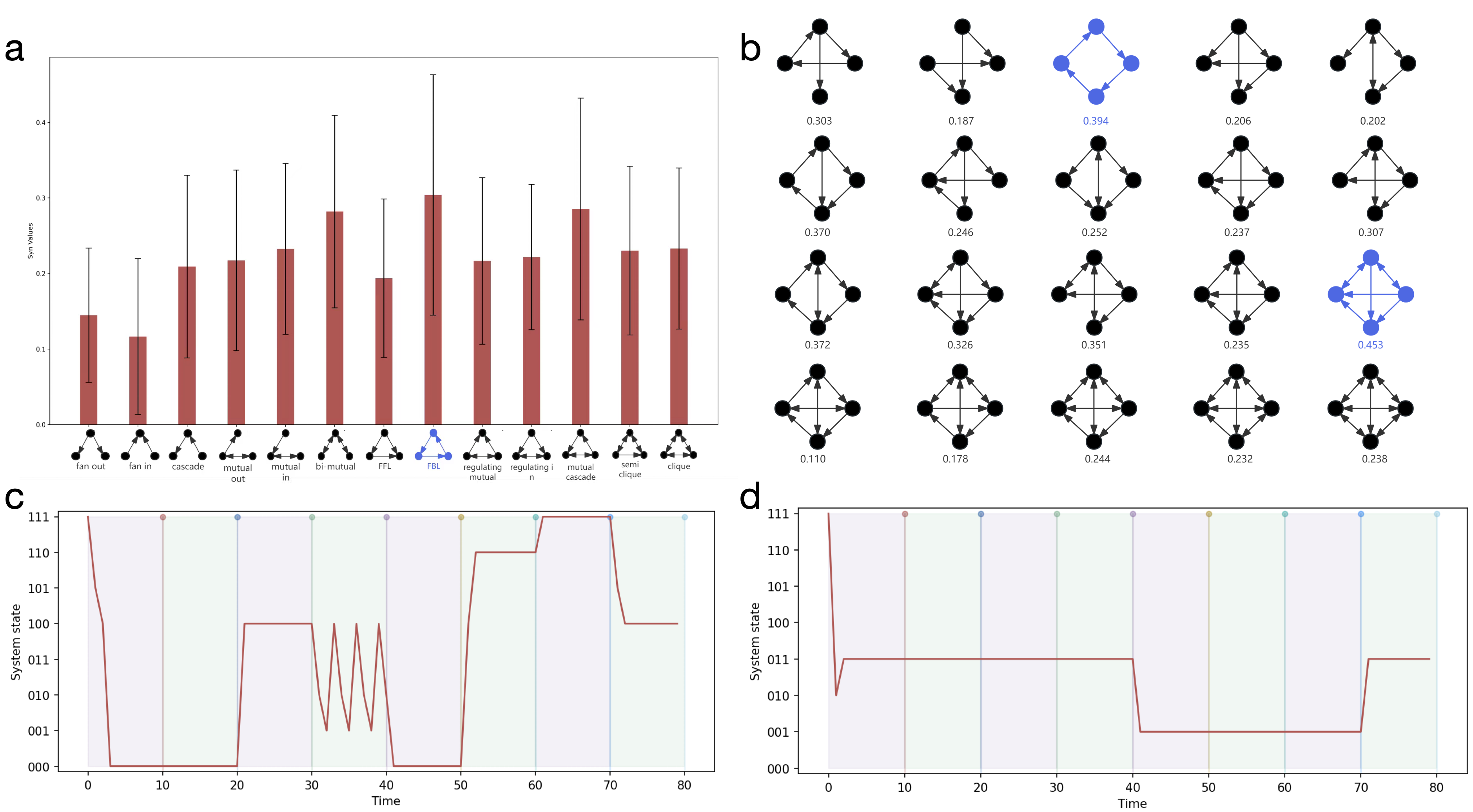}
    \caption{(a) The mean and standard deviation of flexibility for all possible three-node subgraph structures in various real environments, where the value for FBL is the highest. (b) The mean flexibility for some four-node subgraph structures, with the two highlighted structures having the highest values. (c) The time series of state changes for the system composed of the genes BAG4, BAG4\_TNFRSF1A, and TNF\_BAG4\_TNFRSF1A in the GRN of macrophage activation, starting from the initial state "111" and switching environmental states every 10 steps, with a total of 8 randomly selected environmental states. (d) The time series generated by the system composed of the genes CDH1, CDH2, and GF in the GRN of tumour cell invasion and migration, under experimental conditions consistent with those in (c).}
    \label{fig:grn}
\end{figure}

\subsection{Random Boolean Network and Machine Learning}
To explore dynamical characteristics influencing flexibility beyond network topology, subsequent experiments investigate the effects of dynamical parameters on random Boolean networks (RBNs) with fixed structures. Furthermore, while previous experiments were conducted under known dynamical mechanisms, practical scenarios frequently involve data-driven problems with unknown underlying mechanisms. We therefore employ RBN simulations integrated with machine learning methodologies. By first reconstructing governing mechanisms from observational data and subsequently performing flexibility measurements, we validate the applicability of our framework to systems with concealed dynamical rules.

\begin{figure}
    \centering
    \includegraphics[width=0.8\linewidth]{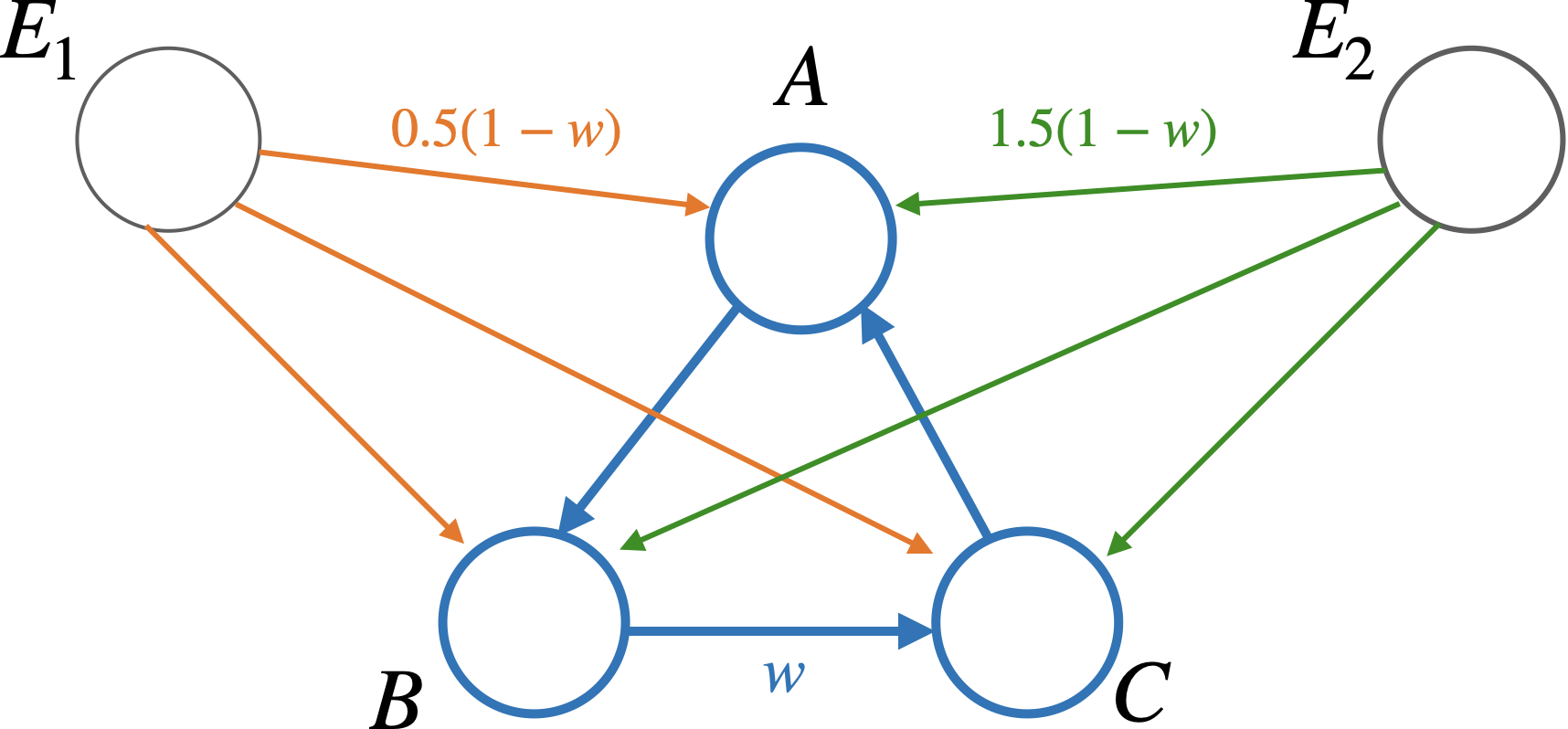}
    \caption{The schematic illustrates the experimental design framework, where nodes $A$, $B$, and $C$ constitute the core system interacting with environmental variables $E_1$ and $E_2$. Edges indicate interaction relationships, with weights $w_{j,i}$ encoded by color-coded mathematical symbols (colored circles denote self-loops). The continuously adjustable parameter $w \in [0,1]$ governs interaction intensities, whose functional role is defined through the mathematical formulation in Eq.\ref{eq:rbn_rule}.}
    \label{fig:rbn_exp}
\end{figure}

In Figure \ref{fig:rbn_exp}, each variable can take on two values, 0 or 1. For any variable $X_i$ in the system, its update rule is defined as follows:

\begin{equation}
\label{eq:rbn_rule}
    P(X_{i}^{t+1}=1\mid u^t)=\frac{1}{1+\exp(-k\sum_{j=1}^nw_{j,i}u_j^t)}
\end{equation}

Each variable $U_j$'s edge acting on another variable $X_i$ respectively corresponds to a weight value $w_{j,i} \in [0, 1]$. $k \in [0, +\infty]$ is a parameter controlling the noise magnitude. When $k = 0$, the noise is the greatest, that is, regardless of the values of the input variables, the conditional probability is a uniform distribution. The larger $k$ is, the smaller the noise intensity is. We set the temperature $T=\frac{1}{k}$, for $k\neq 0$. In this experiment, a total of two variables were set. One is the temperature $T$, and the other is the proportion of the system's own variables and the environmental variables' effect on the system. We set the weight of the system's own effect as $w \in [0, 1]$ (the solid line in Figure \ref{fig:rbn_exp}, including self-loops), and the weights of the effects of the environmental variables on the system (the dashed arrows in Figure \ref{fig:rbn_exp}) are $w_{E_1,i} = 0.5(1 - w)$ and $w_{E_2,i} = 1.5(1 - w)$.  It can be seen that the larger $w$ is, the stronger the influence of the system on itself is, and the weaker the influence of the environment on the system is. The trend graph of the flexibility varying with $T$ and $w$ is shown in Figure \ref{fig:rbn_result}(a). 

\begin{figure}
    \centering
    \includegraphics[width=1\linewidth]{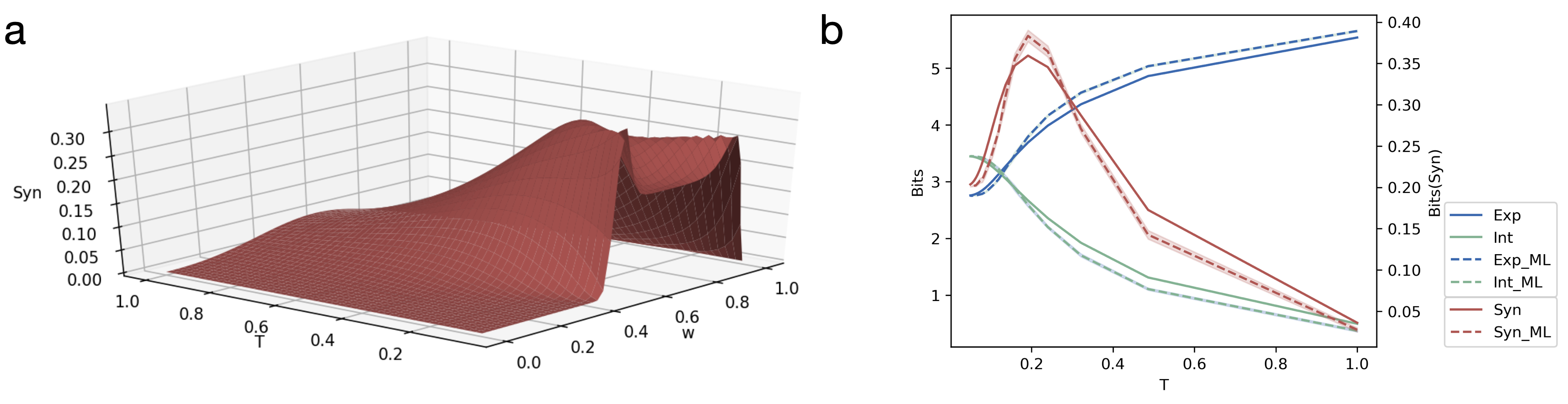}
    \caption{The experimental result graph. In (a), it indicates The trend graph of the flexibility varying with the temperature $T$ and the weight ratio parameter $w$. (b) shows that when $w = 0.5$, the changing trends of flexibility, expansiveness and introversion with respect to $T$. The solid line represents the result of the ground-truth, which is the curve intercepted by the section of $w = 0.5$ in (a). The dashed line is the calculation result of the trained artificial neural network based on machine learning from the generated data. The radius of the band is the standard deviation of the results of 10 repeated experiments.}
    \label{fig:rbn_result}
\end{figure}

As illustrated in Figure \ref{fig:rbn_result}(a), the system exhibits near-zero flexibility when either intrinsic self-influence or environmental influence dominates the dynamics. A maximum flexibility value emerges at optimal coupling strength \( w \), demonstrating the critical role of balanced interactions. Furthermore, increasing the temperature \( T \) generally induces monotonic reduction in flexibility across the parameter space. Notably, Figure \ref{fig:rbn_result}(b) reveals a counterintuitive phenomenon where moderate noise levels (\( T \approx 0.2 \)) paradoxically enhance flexibility to peak values. The phenomenon of enhanced synergistic information under low noise levels has been previously captured by other metrics \cite{orio2023dynamical}. Our experiments demonstrate that this phenomenon originates from interactions at the dynamical mechanism level. More importantly, through the decomposition of flexibility, we reveal that systems leverage low noise to achieve greater diversity and environmental sensitivity (\(Exp\)). The benefits of this trade-off outweigh the loss of intrinsic order (\(Int\)) caused by noise, resulting in an optimal noise level that maximizes flexibility. In biological systems, noise inherent in the interactions of genes regulating circadian rhythms maintains oscillatory behavior without decay \cite{alon2019introduction}. This suggests that other complex systems may benefit similarly from controlled noise levels, enabling optimal environmental responsiveness. 

% The observed \( Syn \) peak arises from the differential rates of these opposing effects: the initial growth rate of \( Exp \) surpasses the decline rate of \( Int \), but reverses when \( T > 0.2 \), as the \( Int \)-reduction magnitude dominates the \( Exp \)-growth contribution.

To validate the framework's capability in reconstructing and quantifying system mechanisms under unknown dynamics, we train a neural network (NN) with four fully-connected layers (32→64→64→16→8) using data generated from conditional probabilities at $w=0.5$ under uniform input distribution. The NN architecture employs LeakyReLU activations in hidden layers and cross-entropy loss for one-hot encoded inputs. The alignment between predicted (dashed) and theoretical (solid) curves in Figure \ref{fig:rbn_result}(b) demonstrates successful extension of our measurement framework to data-driven scenarios through machine learning integration.

\section{\label{sec:4}Discussion}
Overall, for systems and environments that satisfy the Markov dynamics assumption, we defined on the TPM how to measure the synergistic influence of the system and the environment on the system - flexibility, which captures the flexibility of the system. It is neither completely determined by the dynamics of the system itself (different from Individual Driving Information) nor completely determined by the dynamics of the environment on the system (different from External Driving Information), but corresponds to the part in the dynamics of the system and the environment as a whole where the whole is greater than the sum of the parts. In the experiments of CA, we verified that complex cellular automata have higher flexibility. More importantly, on the Boolean network data of GRNs selected by experts, we found that the structure of feedback loops in various real environments has higher flexibility. This indicates that flexibility specifically points to the biological functions carried by this structure, such as dynamic compensation, biological cycle regulation, and so on \cite{alon2019introduction}. Currently, we only compared different Boolean network structures and have not distinguished different Boolean functions on the same structure. In the future, flexibility can be used to predict whether new structures and new dynamic functions have the biological functions we are interested in. 

Through decomposition of flexibility into dual components, we established expansiveness and introversion. These respectively measure the interaction variety between system and environment, and the level of dynamic organization in behavioral patterns. As shown in the machine learning experiments, the noise intensity in the dynamics is inversely proportional to the magnitude of introversion. While expansiveness measures, apart from the noise factor, to what extent the influences on the system from the system itself and the environment cannot be decoupled. Combining the machine learning experiments on RBNs and the computational results on CAs, we found that when there is noise in the dynamics, the reduction of noise increases flexibility by increasing the magnitude of introversion. And when the noise in the dynamics remains unchanged, the coupling degree of the influences of the system and the environment on the system is reflected by expansiveness, and at this time, the change of flexibility is dominated by the change of expansiveness. The analysis of expansiveness indicates that in the process of obtaining the variable space from the state space through a certain partition, the known boundary between the system and the environment is only one of several possible partitions, and synergy occurs when the original boundary is too ambiguous, so that we need to find a new coarse-graining of the state space. 

At present, there are still some areas for improvement in this framework. We assumed that the dynamics satisfy Markovianity, while many real-world problems need to be solved within a non-Markovian framework. Additionally, although we obtained indicators of multivariate information decomposition with good properties on the three-variable system with two variables acting on one variable, in the future, the calculation problem of synergistic information when the source variables reach three or more needs to be addressed. In this paper, the calculation and experiments of the indicators are based on discrete systems, so in the future, the definition and calculation method of flexibility on continuous systems also need to be further proposed. 

We currently assume that the dynamics are known. If the dynamics are unknown but the data is accessible, machine learning techniques can also be used to obtain the underlying dynamic mechanism first and then measure it. The machine learning in this paper is a preliminary attempt to prove its feasibility. In the future, we can introduce more complex neural network models to learn more complex dynamic mechanisms, and even take flexibility as the optimization goal to train artificial models with higher flexibility. 

\begin{acknowledgments}
We wish to acknowledge the support of Swarma Research and the assistance of Lifei Wang, a scientist from Swarma. 
\end{acknowledgments}

%TC:ignore
\appendix

\section{Partial Information Decomposition Theory}
\label{sec:pid}
Partial Information Decomposition (PID) is a theoretical framework designed to address the problem of multivariate information decomposition \cite{williams2010nonnegative}. To calculate the information transfer between multiple variables, people have proposed quantification indicators such as total correlation \cite{watanabe1960information} and interaction information \cite{te1980multiple}. However, they have very important flaws in application, such as not satisfying non-negativity \cite{yeung2008information}. For this reason, Williams et al. \cite{williams2010nonnegative} proposed the PID theory, decomposing joint mutual information into three types of information atoms: redundant information, unique information, and synergistic information. As shown in Figure \ref{fig:pid}, in a three-variable system, \textbf{redundant information} represents the part of information that either of the two source variables can provide to the target variable; \textbf{unique information} represents the part of information that only one source variable can provide and the other source variable cannot; \textbf{synergistic information} refers to the part of information that can only be provided when the two source variables are together.

\begin{figure}
    \centering
    \includegraphics[width=0.5\linewidth]{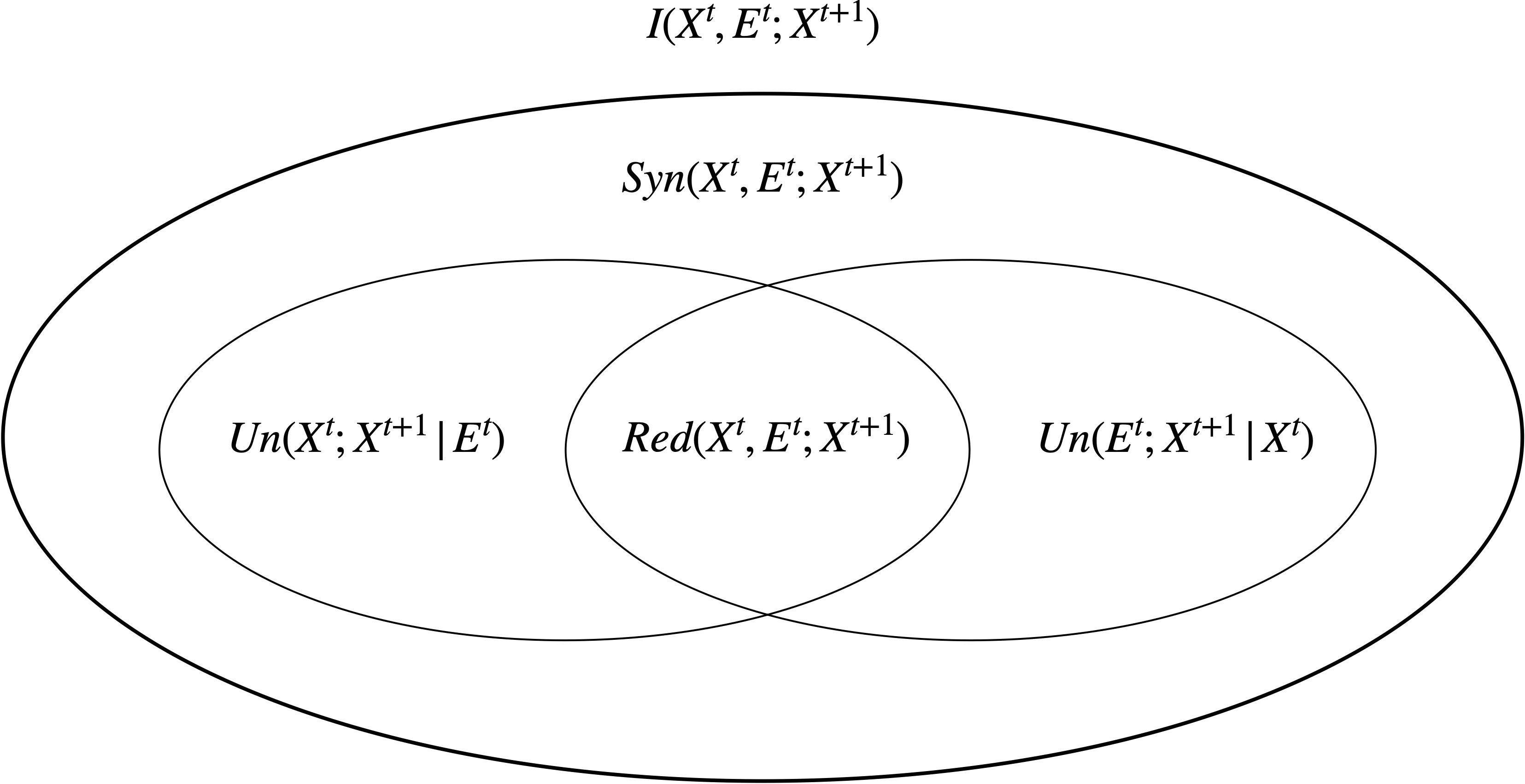}
    \caption{The PID framework in considering the information dynamics of the system and the environment. The two source variables are the variable $X^t$ of the system at time $t$ and the variable $E^t$ of the environment at time $t$, and the target variable is the variable $X^{t+1}$ of the system at time $t + 1$. The large outer ellipse represents the joint mutual information provided by the two source variables to the target variable. The overlapping part of the two small inner ellipses represents the redundant information provided by the two source variables. Each small ellipse represents the mutual information of a single source variable to the target variable. The remaining part other than the redundant information is the unique information. The area not covered by both small ellipses in the large ellipse is the synergistic information.}
    \label{fig:pid}
\end{figure}

The contribution of PID theory lies in that it not only gives a qualitative division but also provides a strict axiom system for redundant information \cite{williams2011information}, so that we can follow this axiom system to find quantitative calculation methods. The axiom system initially proposed by Williams et al. \cite{williams2011information} is as follows:
\begin{gather}
    (\mathbf{S})Red(A_1,A_2,...,A_k;S) \text{ is symmetric in } A_1,A_2,...,A_k.\\
    (\mathbf{I})Red(A;S)=I(A;S).\\
    (\mathbf{M})Red(A_1,A_2,...,A_k;S)\leq  Red(A_1,A_2,...,A_{k-1};S),\text{with equality if }A_{k-1}\subseteq A_k.
\end{gather}

Among them, $S$ is the target variable, $A_1,A_2,...,A_k\subseteq \{X_1, X_2,..., X_n\}$ is a combination of source variables, and $X_i$ is a source variable. Starting from these axioms, we can give the definitions of unique information and synergistic information, as well as the decomposition of mutual information and joint mutual information \cite{bertschinger2013shared}.

\begin{gather}
\label{eq:i_un_red}
    I(X_1;S) = Red(X_1, X_2;S)+Un(X_1;S|X_2)\\
    I(X_2;S) = Red(X_1, X_2;S)+Un(X_2;S|X_1)\\
\label{eq:i_syn}
    I(X_1,X_2; S) = Red(X_1, X_2;S)+Un(X_1;S|X_2)+Un(X_2;S|X_1)+Syn(X_1, X_2;S).
\end{gather}

In the above equation, $I(X_i; S)=Red(X_i;S)$, $ I(X_1,X_2; S)=Red(\{X_1,X_2\};S)$. Later, people found that these three axioms are not sufficient to describe concepts such as redundancy in our cognition, and the specific calculation formula for redundant information proposed by Williams et al. will yield abnormal results in some examples \cite{harder2013bivariate}. Therefore, on the basis of the above three axioms, people have added other axiomatic requirements \cite{bertschinger2013shared}.

\begin{gather}
    (\mathbf{LC})Red(A_1,A_2,...,A_k;SS') = Red(A_1,A_2,...,A_k;S)+Red(A_1,A_2,...,A_k;S'|S).\\
    (\mathbf{Id})Red(A_1,A_2;A_1\cup A_2)=I(A_1;A_2).
\end{gather}

In these equations, $Red(A_1,A_2,...,A_k;S'|S)=\sum_{s\in \mathcal{S}}p(s)Red(A_1,A_2,...,A_k;S'|s)$. $S'$ is also an arbitrary target variable. There are also some axioms mentioned in \cite{bertschinger2013shared} that will not be repeated here, because they can be derived from the above axioms. People have been trying to propose a computable definition of redundant information that can satisfy all the above axioms. For example, the calculation method proposed by Harder et al. \cite{harder2013bivariate} can satisfy axioms $(\mathbf{S})(\mathbf{I})(\mathbf{M})(\mathbf{Id})$, but it does not satisfy axiom $(\mathbf{LC})$. And the method proposed by Finn and Lizier \cite{finn2018pointwise} satisfies axiom $(\mathbf{LC})$, but does not satisfy axiom $(\mathbf{Id})$. Therefore, currently, there is no feasible calculation method for redundant, unique and synergistic information that can satisfy all axiom constraints.

\section{Effective Information}
\label{sec:ei}
Effective information (EI) is a quantitative measure that is used to characterize the strength of causal effects in markov dynamics. It was first proposed by Tononi et al.\cite{tononi2003measuring} as a key indicator in integrated information theory. Subsequently, Hoel et al. \cite{hoel2013quantifying} employed this metric to quantify the strength of causal effects in dynamics and further defined \textit{causal emergence}. EI is calculated based on the TPM and is independent of other factors. Its formal definition is as follows \cite{yuan2024emergence}:

\begin{equation}
\begin{aligned}
\label{eq:ei}
    EI(TPM) &= I(X^t,X^{t+1}|do(X^t\sim \mathcal{U}(\Omega_X)))  \\
    &= \frac{1}{N}\sum^N_{i=1}\sum^N_{j=1}p_{ij}\log\frac{N\cdot p_{ij}}{\sum_{k=1}^N  p_{kj}}
\end{aligned}
\end{equation}

In the given context, $\Omega_X$ represents the state space of $X^t$. The expression $do(X^t \sim \mathcal{U}(\Omega_X))$ indicates that the original definition of EI is the mutual information when the input variables are intervened to be uniformly distributed (i.e., at maximum entropy). It is essentially a function of the TPM \cite{zhang2024dynamical}. In the second equivalent computational formula, $p_{ij}$ denotes the probability of transitioning from state $i$ to state $j$, and $N = |\Omega_X|$. In the main text, Eqs.\ref{eq:un1}, \ref{eq:un2}, and \ref{eq:syn} all require the substitution of Eq. \ref{eq:ei} for calculation.

\section{Proofs}
\subsection{Proofs of Theorem \ref{th:syn}}
\label{sec:proof1}
In Appendix \ref{sec:pid}, we present the axiomatic system of PID theory. Based on this system, we initially state the following lemma:

\begin{lemma}
\label{lem:red0}
    Given axioms (\(\mathbf{S}\), \(\mathbf{I}\), \(\mathbf{M}\), \(\mathbf{LC}\), \(\mathbf{Id}\)), for any arbitrary variables \(X_1\), \(X_2\), and \(Y\), if \(X_1 \perp X_2\), the redundant information \(Red(X_1, X_2; Y) = 0\).
\end{lemma}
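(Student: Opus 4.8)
The plan is to prove the lemma purely from the PID axioms, treating $Red$ as an abstract function constrained only by (\textbf{S}), (\textbf{I}), (\textbf{M}), (\textbf{LC}), (\textbf{Id}), and showing that independence of $X_1$ and $X_2$ forces $Red(X_1,X_2;Y)=0$ for every target $Y$. The key observation is that (\textbf{Id}) pins down the redundancy when the target is $X_1\cup X_2$ itself: $Red(X_1,X_2;X_1\cup X_2)=I(X_1;X_2)=0$ by independence. So the strategy is to bootstrap from this special target to an arbitrary target $Y$ using monotonicity (\textbf{M}) and the chain-rule-like axiom (\textbf{LC}).

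First I would record the base case: since $X_1\perp X_2$, we have $I(X_1;X_2)=0$, hence by (\textbf{Id}) $Red(X_1,X_2;X_1\cup X_2)=0$. Next I would use (\textbf{LC}) to relate the target $X_1\cup X_2$ to the joint target $(X_1\cup X_2)$ paired with $Y$: writing $Z=X_1\cup X_2$, the local chain axiom gives
\begin{equation}
Red(X_1,X_2;ZY)=Red(X_1,X_2;Z)+Red(X_1,X_2;Y\mid Z),
\end{equation}
and by the symmetric use of (\textbf{LC}) with the roles of $Z$ and $Y$ swapped,
\begin{equation}
Red(X_1,X_2;ZY)=Red(X_1,X_2;Y)+Red(X_1,X_2;Z\mid Y).
\end{equation}
Since $Red(X_1,X_2;Z)=0$ from the base case and all redundancy terms are non-negative (a standard consequence of the axioms, or built in), the first display forces $Red(X_1,X_2;ZY)=Red(X_1,X_2;Y\mid Z)\le \sum_z p(z)I(X_i;Y\mid z)$ after an application of (\textbf{I})/(\textbf{M}); more directly, the second display then yields $Red(X_1,X_2;Y)\le Red(X_1,X_2;ZY)$. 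The point is to squeeze $Red(X_1,X_2;Y)$ between $0$ and something that also vanishes.

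The cleanest route, which I would make the spine of the proof, is: by (\textbf{M}), $Red(X_1,X_2;Y)\le Red(X_1;Y)=I(X_1;Y)$ and likewise $\le I(X_2;Y)$; this alone does not give $0$, so I must instead exploit that $X_1\cup X_2$ is a ``richer'' target than $Y$ in the variable-ordering sense only if $Y$ is a function of $(X_1,X_2)$, which need not hold. Hence the genuine argument must go through (\textbf{LC}): expand $Red(X_1,X_2;X_1 X_2 Y)$ two ways, use $Red(X_1,X_2;X_1 X_2)=I(X_1;X_2)=0$, use non-negativity of the conditional redundancy terms, and conclude $Red(X_1,X_2;Y)+Red(X_1,X_2;X_1X_2\mid Y)=Red(X_1,X_2;X_1X_2\mid Y)$ hence... — here is where care is needed. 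I expect the main obstacle to be precisely this bookkeeping: establishing that $Red(X_1,X_2;X_1X_2\mid Y)=0$, which should follow because conditionally on each $y$, $Red(X_1,X_2;X_1X_2\mid y)=I(X_1;X_2\mid y)$ by a conditional version of (\textbf{Id}), and independence $X_1\perp X_2$ does \emph{not} in general imply conditional independence given $Y$. To handle this I would instead run the expansion in the order that puts $Y$ first: $Red(X_1,X_2;Y\cdot X_1X_2)=Red(X_1,X_2;Y)+Red(X_1,X_2;X_1X_2\mid Y)$ and also $=Red(X_1,X_2;X_1X_2)+Red(X_1,X_2;Y\mid X_1X_2)=0+Red(X_1,X_2;Y\mid X_1X_2)$; then since $Y\mid X_1X_2$ contributes $\sum p(x_1,x_2)Red(X_1,X_2;Y\mid x_1,x_2)\le \sum p(x_1,x_2) I(X_1;Y\mid x_1,x_2)=0$ because conditioning on $x_1$ makes $I(X_1;Y\mid x_1)=0$, we get $Red(X_1,X_2;Y\cdot X_1X_2)=0$, and non-negativity of both summands in the first expansion forces $Red(X_1,X_2;Y)=0$. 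So the real work is verifying the conditional forms of (\textbf{I})/(\textbf{M})/(\textbf{Id}) are available from the axiom set and that the pointwise conditional redundancies are non-negative; once that is in place the proof closes in a line.
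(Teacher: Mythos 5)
Your proposal is correct and follows essentially the same route as the paper's proof: expand $Red(X_1,X_2;(X_1,X_2,Y))$ two ways via (\textbf{LC}), kill $Red(X_1,X_2;X_1X_2)$ by (\textbf{Id}) plus independence, kill the conditional term $Red(X_1,X_2;Y\mid X_1X_2)$ by a vanishing conditional mutual-information bound, and conclude by non-negativity of both terms in the other ordering. The only cosmetic difference is that you bound the conditional redundancy pointwise by $I(X_1;Y\mid x_1,x_2)=0$ while the paper uses $I(X_1,X_2;Y\mid X_1,X_2)=0$; both rest on the same (implicitly assumed) conditional versions of the axioms.
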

\begin{proof}
    From axiom \(\mathbf{LC}\), it follows that:
\begin{equation}
\begin{aligned}
\label{eq:thm_lc}
Red(X_1, X_2; (X_1, X_2, Y)) &= Red(X_1, X_2; (X_1, X_2)) + Red(X_1, X_2; Y | (X_1, X_2)) \\
&= Red(X_1, X_2; Y) + Red(X_1, X_2; (X_1, X_2) | Y).
\end{aligned}
\end{equation}

From the axioms (\(\mathbf{S}\), \(\mathbf{I}\), \(\mathbf{M}\)), it first follows that redundant information is non-negative and that redundant information is less than or equal to the joint mutual information provided by the two source variables \cite{williams2011information}. Consequently, conditional redundant information is also less than or equal to conditional joint mutual information. Thus, we can derive the following inequality:
\begin{equation}
    Red(X_1, X_2; Y | (X_1, X_2)) \leq I(X_1, X_2; Y | (X_1, X_2)) = 0.
\end{equation}

From the \(\mathbf{Id}\) axiom and the condition \(X_1 \perp X_2\), we can deduce that:
\begin{equation}
    Red(X_1, X_2; (X_1, X_2)) = I(X_1, X_2) = 0.
\end{equation}

Thus,
\begin{equation}
\label{eq:thm_zero}
    Red(X_1, X_2; Y) + Red(X_1, X_2; (X_1, X_2) | Y) = 0.
\end{equation}

Given that redundant information is non-negative, we have:
\begin{equation}
    Red(X_1, X_2; Y), Red(X_1, X_2; (X_1, X_2) | Y) \geq 0.
\end{equation}

Combining this with Eq.(\ref{eq:thm_lc}) and Eq.(\ref{eq:thm_zero}), we conclude that:
\begin{equation}
    Red(X_1, X_2; Y) = Red(X_1, X_2; (X_1, X_2) | Y) = 0.
\end{equation}

\end{proof}

Next, we restate the content of Theorem \ref{th:syn} and provide a proof:

In a trivariable system, the flexibility defined in Eq. \eqref{eq:syn} is the synergistic information of $\tilde{X}^t$ and $\tilde{E}^t$ with respect to $\tilde{X}^{t+1}$.

\begin{proof}
    Given that we intervene $X^t$ and $E^t$ to achieve a uniform distribution, resulting in $\tilde{X}^t$ and $\tilde{E}^t$, it follows that $\tilde{X}^t\perp \tilde{E}^t$. Based on Lemma \ref{lem:red0}, we have:
\begin{equation}
\label{eq:red_fin}
    Red(\tilde{X}^t, \tilde{E}^t; \tilde{X}^{t+1}) = 0.
\end{equation}

Drawing on the relationship between information atoms and mutual information, as given in Eq. \eqref{eq:i_un_red}:
\begin{equation}
    Un(\tilde{X}^t; \tilde{X}^{t+1} | \tilde{E}^t) = I(\tilde{X}^t; \tilde{X}^{t+1}) - Red(\tilde{X}^t, \tilde{E}^t; \tilde{X}^{t+1}) = I(\tilde{X}^t; \tilde{X}^{t+1}).
\end{equation}

The same applies to $Un(\tilde{E}^t; \tilde{X}^{t+1} | \tilde{X}^t)$. Since $\tilde{E}^t$ is also uniformly distributed, according to the definition and expression of effective information (please refer to Appendix \ref{sec:ei}), we obtain:
\begin{equation}
    I(\tilde{X}^t; \tilde{X}^{t+1}) = EI(P_{X^t \to X^{t+1}}).
\end{equation}

Based on Eqs. (\ref{eq:i_syn}) and (\ref{eq:red_fin}), we derive the following:
\begin{equation}
    \begin{aligned}
        Syn(\tilde X^t, \tilde E^t; \tilde X^{t+1}) &= I(\tilde{X}^t, \tilde E^t; \tilde{X}^{t+1}) - I(\tilde{X}^t; \tilde{X}^{t+1}) - I(\tilde{E}^t; \tilde{X}^{t+1}) \\
        &= EI(P_{X^t,E^t \to X^{t+1}}) - EI(P_{X^t \to X^{t+1}}) - EI(P_{E^t \to X^{t+1}}).
    \end{aligned}
\end{equation}

This is precisely the flexibility defined in Eq. (\ref{eq:syn}) in the main text.
\end{proof}

\subsection{Proofs of the Upper Bound of Synergy}
\label{sec:proof2}
We restate the property as follows: The upper bound of \( Syn(\tilde{X}^t, \tilde{E}^t; \tilde{X}^{t+1}) \) is \(\min\{I(\tilde{X}^t; \tilde{X}^{t+1} | \tilde{E}^t), \allowbreak I(\tilde{E}^t; \tilde{X}^{t+1} | \tilde{X}^t)\}\).
\begin{proof}
    According to the definition of EI,
    \begin{equation}
        EI(P_{X^t \to X^{t+1}}) = I(\tilde{X}^t; \tilde{X}^{t+1}),
    \end{equation}
    while still satisfying \( do(X^t, E^t \sim \mathcal{U}(\Omega_{X,E})) \). Therefore, by the chain rule of mutual information, we have
    \begin{equation}
        \begin{aligned}
        \label{eq:syn_eq}
            Syn(\tilde{X}^t, \tilde{E}^t; \tilde{X}^{t+1}) &= I(\tilde{X}^t, \tilde{E}^t; \tilde{X}^{t+1}) - I(\tilde{X}^t; \tilde{X}^{t+1}) - I(\tilde{E}^t; \tilde{X}^{t+1}) \\
            &= I(\tilde{X}^t; \tilde{X}^{t+1} | \tilde{E}^t) - I(\tilde{X}^t; \tilde{X}^{t+1})
        \end{aligned}
    \end{equation}
    Due to the non-negativity of mutual information, it follows that \( Syn(\tilde{X}^t, \tilde{E}^t; \tilde{X}^{t+1}) \leq I(\tilde{X}^t; \tilde{X}^{t+1} | \tilde{E}^t) \). Similarly, \( Syn(\tilde{X}^t, \tilde{E}^t; \tilde{X}^{t+1}) \leq I(\tilde{E}^t; \tilde{X}^{t+1} | \tilde{X}^t) \). 
    
    Thus, \(\min\{I(\tilde{X}^t; \tilde{X}^{t+1} | \tilde{E}^t), I(\tilde{E}^t; \tilde{X}^{t+1} | \tilde{X}^t)\}\) is the upper bound of \( Syn(\tilde{X}^t, \tilde{E}^t; \tilde{X}^{t+1}) \).
\end{proof}

\subsection{Proofs of Corollary \ref{cor:decomp}}
\label{sec:proof3}
We first restate the definition of flexibility, namely Eq. \ref{eq:syn},
\begin{gather}
    Syn(P_{X^t,E^t\to X^{t+1}})=EI(P_{X^t,E^t\to X^{t+1}})-EI(P_{X^t \to X^{t+1}})-EI(P_{E^t \to X^{t+1}}).
\end{gather}
as well as the expression for EI, namely Eq. \ref{eq:ei_factor},
\begin{equation}
    EI(P_{X\to Y}) = - \left( \frac{1}{N} \sum_{i=1}^N H(P_i) \right)  + H\left( \frac{1}{N} \sum_{i=1}^N P_i \right) 
\end{equation}

Substituting Eq. (\ref{eq:ei_factor}) into Eq. (\ref{eq:syn}), we have
\begin{equation}
    \begin{aligned}
        Syn(P_{X^t,E^t\to X^{t+1}}) &=  -\frac{1}{|\Omega_{X,E}|} \sum_{x \in \Omega_X} \sum_{e \in \Omega_E} H(P_{x,e}) + H\left(\frac{1}{|\Omega_{X,E}|} \sum_{x \in \Omega_X} \sum_{e \in \Omega_E} P_{x,e} \right) \\
        &- \left(-\frac{1}{|\Omega_X|} \sum_{x \in \Omega_X} H\left( \frac{1}{|\Omega_E|} \sum_{e \in \Omega_E} P_{x,e} \right) + H\left(\frac{1}{|\Omega_{X,E}|} \sum_{x \in \Omega_X} \sum_{e \in \Omega_E} P_{x,e} \right)\right) \\& - \left(-\frac{1}{|\Omega_E|} \sum_{e \in \Omega_E} H\left( \frac{1}{|\Omega_X|} \sum_{x \in \Omega_X} P_{x,e} \right) + H\left(\frac{1}{|\Omega_{X,E}|} \sum_{x \in \Omega_X} \sum_{e \in \Omega_E} P_{x,e} \right)\right)\\
        &= \underbrace{\frac{1}{|\Omega_E|} \sum_{e \in \Omega_E} H\left( \frac{1}{|\Omega_X|} \sum_{x \in \Omega_X} P_{x,e} \right) +  \frac{1}{|\Omega_X|} \sum_{x \in \Omega_X} H\left( \frac{1}{|\Omega_E|} \sum_{e \in \Omega_E} P_{x,e} \right)}_{\text{expansiveness}}\\
        &\underbrace{2\log_2{|\Omega_X|}- \frac{1}{|\Omega_{X,E}|} \sum_{x \in \Omega_X} \sum_{e \in \Omega_E} H(P_{x,e}) - H\left(\frac{1}{|\Omega_{X,E}|} \sum_{x \in \Omega_X} \sum_{e \in \Omega_E} P_{x,e} \right)}_{\text{introversion}} - 2\log_2{|\Omega_X|}\\
        &=Exp(P_{X^t,E^t\to X^{t+1}}) + Int(P_{X^t,E^t\to X^{t+1}}) - 2\log_2{|\Omega_X|}.
    \end{aligned}
\end{equation}

\section{Additional Experiments with GRN Data}
\label{sec:grn}
To illustrate the intuitive connection between flexibility and a system's flexible response to environmental changes, we plotted the scatter diagram shown in Figure \ref{fig:grn_app}. Here, MIs represent the average magnitude of mutual information between consecutive moments for various experimental conditions. It takes into account the impact of environmental changes on the system, accumulating mutual information only when environmental alterations lead to shifts in system steady states; otherwise, the system's mutual information under the new environment is recorded as 0. Additionally, since it measures the system's own mutual information, it quantifies whether the system maintains maximal intrinsic information transfer across any environment. Although the calculation of MIs is affected by sampling and does not have an exact correspondence with flexibility, the trend of the fitted line indicates a positive correlation between the two. Their Pearson correlation coefficient is 0.487. Conducting a hypothesis test with the null hypothesis of no significant correlation between the two yields a p-value less than 0.05, indicating a significant correlation. The network structure and function settings are derived from the GRN controlling apoptosis as described in the data from \cite{kadelka2024meta}.

\begin{figure}
    \centering
    \includegraphics[width=0.5\linewidth]{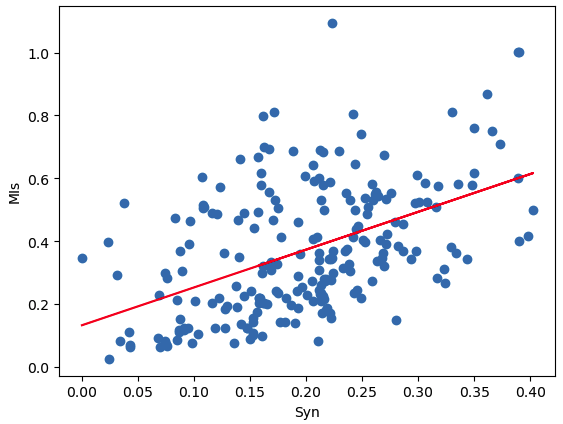}
    \caption{We searched all three-node subgraph structures in the GRN of apoptosis and calculated their flexibility (Syn) as well as the mean mutual information (MIs) between consecutive moments when environmental state changes lead to shifts in system steady states. The calculation of MIs considered all possible initial states of the system and all possible environmental states, with the sequence of environmental state transitions being randomly determined each time. This allowed us to generate a scatter plot of MIs versus Syn, with the red line representing the fitted linear regression of the plot.}
    \label{fig:grn_app}
\end{figure}

% Network motifs are subgraphs with a specific structure that recur throughout a network and often carry out a certain function. There are many motifs in the network, but only a few are found significantly. In total, there are 13 possible ways to connect three nodes with directed arrows and 199 possible four-node subgraphs.[] In the experiment, we computed 13 types of three-node motifs (including fan out, fan in, cascade, mutual, regulating mutual, regulating in, mutual out, mutual in, mutual cascade, bi-mutual, semiclique, clique) and 20 types of four-node motifs, which include two important motifs, namely two-output feed-forward loop (FFL) and bi-fan, and the number of edges ranges from 4 to 12.
% \section{Experiments on RBNs}
% \label{sec:add_exp}

\bibliography{apssamp}% Produces the bibliography via BibTeX.
%TC:endignore
\end{document}